\newtheorem{theorem}{Theorem}[section]
\newtheorem{definition}[theorem]{Definition}
\newtheorem{remark}[theorem]{Remark}
\newtheorem{lemma}[theorem]{Lemma}
\newcommand{\cH}{\mathcal H}
\newcommand{\fq}{{\mathbb{F}_q}}
\newcommand{\fqs}{{\mathbb{F}_{q^2}}}
\newcommand{\cf}{\mathcal{F}_{q, r}}
\begin{document}
\title{Construction of sequences with high nonlinear complexity from a generalization of the Hermitian function field}

\date{}

\author{A.S. Castellanos\thanks{Alonso S. Castellanos is with Universidade Federal de Uberl\^andia, Campus Santa M\^onica, Av. Jo\~ao Naves de Avila 2121, Uberl\^andia-MG, CEP 38.408-100, Brazil, {\em email:} alonso.castellanos@ufu.br, {\em orcid:} 0000-0003-2862-5339.},\,
L. Quoos\thanks{Luciane Quoos is with Universidade Federal do Rio de Janeiro, Centro de Tecnologia - Bloco C, Cidade Universitária - Av. Athos da Silveira Ramos, 149 - Ilha do Fundão, CEP 21.941-909 - Brazil, {\em email: } luciane@im.ufrj.br, {\em orcid:} 0000-0001-6310-0859.} \, and \,
G. Tizziotti\thanks{Guilherme Tizziotti is with the Universidade Federal de Uberl\^andia, Campus Santa M\^onica, Av. Jo\~ao Naves de Avila 2121, Uberl\^andia-MG, CEP 38.408-100, Brazil, {\em email: } guilhermect@ufu.br, {\em orcid:} 0000-0003-1026-0546.}
}

\maketitle

\begin{abstract}
For $r \geq 1$ an odd integer, we provide a sequence from the function field $\cf$ of the maximal curve over $\mathbb{F}_{q^{2r}}$ defined by the affine equation $y^q+y=x^{q^r + 1}$. This sequence has high nonlinear complexity, and this fact comes from the existence of a rational function on $\cf$ with pole divisor of small degree, and support in certain $q$ rational places. 


\end{abstract}

{\small{\bf MSC codes (2020):} 14G50; 94A55}

\section{Introduction}

In recent years the study of sequences over finite fields has increased due to its applications in cryptography and pseudorandom number generation, see e.g. \cite{GOR, YXY, MW} and \cite{NX}. In cryptography, to assess the suitability of a pseudorandom sequence, the complexity-theoretic and statistical requirements must be tested. In practice, both categories of tests - complexity-theoretic and statistical - should be carried out since these two categories are in a sense independent. For details on the testing of pseudorandom sequences in a cryptographic context, we refer the reader to \cite{Rueppel}. In this paper, we are concerned about the complexity-theoretic analysis of pseudorandom sequences. 

One of the requirements of such sequences is that it should be very hard to replicate the entire sequence from the knowledge of a part of it, that is, its complexity should be large. Many different complexity measures are available in the literature, the most usual being the {\it linear complexity}. 

Let $\fq$ be the finite field with $q$ elements, where $q$ is the power of a prime $p$.
The linear complexity $L(\bf{t_n})$ of a sequence $\mathbf{t}_n=(t_1, \ldots , t_n)$ over $\fq$  is the length $m $ of a shortest linear recurrence relation
$$
t_{j+m} = c_{m-1}t_{j+m-1} +\cdots +c_0t_j, \quad 0 \leq j \leq n-m-1,
$$ 
where $c_j \in \mathbb{F}_q$ for all $j=0,1,\ldots, m-1$. We refer to the recent handbook \cite{MW} for a concise survey on linear complexity of sequences. 

In the past years various types of sequences  with a large linear
complexity have been constructed from function fields , see $\cite{ XKD}$ and \cite{XL}. In \cite{NX}, Niederreiter and Xing investigated the complexity measure of certain sequences related to feedback shift registers with feedback functions of higher algebraic degree, the so-called \textit{nonlinear complexity}.
More precisely, for a non zero sequence $\mathbf{t}_n$ of length $n \geq 1$ over a finite field $\mathbb{F}_q$ and $k \in \mathbb{N}$, the nonlinear complexity  $N^{(k)}(\mathbf{t}_n)$ is defined as the smallest $m \in \mathbb{N}$ for which there exists a polynomial $f \in \mathbb{F}_q[x_1, \ldots , x_m]$ of degree at most $k$ in each variable such that
$$
t_{i+m} = f(t_i, t_{i+1}, \ldots , t_{i+m-1}), \mbox{ for } 1 \leq i \leq n-m.
$$
If we allow the total degree of the polynomial $f$ to be at most $k$ a similar nonlinear complexity $L^{(k)}(\mathbf{t}_n)$ can be found, see Definition $\ref{order nonlinear 2}$. 

In this paper, we focus on constructing sequences with large nonlinear complexity. Recently, constructions of sequences from function fields with high nonlinear complexity  arose in the work of Niederreiter and Xing \cite{NX}; Luo, Xing, and You \cite{YXY};  and Geil, \"Ozbudak and Ruano \cite{GOR}. In \cite{YXY}, the authors construct sequences over $\fq$ using the rational and cyclotomic function fields. In \cite{NX, GOR}, the authors work over the Hermitian function field $\cH$  to construct sequences over $\fqs$ of length $(q-1)(q^2-1)$ with large nonlinear complexity. Both constructions rely on the action of a  large cyclic subgroup of the automorphism group $Aut(\cH)$ of the function field $\cH$ on the $\fqs$-rational places of $\cH$.
The results in $\cite[\mbox{Theorems } 1, 2]{GOR}$ improve the
bounds on the nonlinear complexities given in $\cite[\mbox{Theorems } 3, 4]{NX}$.


In this work, we consider a generalization of the Hermitian curve, the maximal curve defined over $\mathbb{F}_{q^{2r}}$ by the affine equation $y^q + y = x^{q^r + 1}$, where $r\geq 1$ is an odd integer. Denote by  $\mathcal{F}_{q, r}$ its function field and by $P_\infty$ the only pole of $x$ and $y$.

We propose a new sequence, inspired by the sequences presented in $\cite{GOR}$ and $\cite{NX}$, from the function field $\mathcal{F}_{q, r}$ over $\mathbb{F}_{q^{2r}}$ of length $q(q^{2r}-q^{r-1}-q^{r-2} - \ldots - q - 2)$. The constructed sequence relies on the existence of functions in $\mathcal{F}_{q, r}$ with pole divisor in $q$  fixed rational places $P_\infty, Q_2, \ldots, Q_{q}$ on $\mathcal{F}_{q, r}$. These functions are determined using a characterization of the Weierstrass semigroup 
$$
H(P_\infty , Q_2, \ldots, Q_{q})=
\{ (a_1, \dots, a_q) \in \mathbb{N}^q \mid \exists \ f \in\mathcal{F}_{q, r}, \,  (f)_\infty=a_1P_\infty+a_2Q_2+ \dots +a_q Q_{q}\},
$$
given in \cite{CT2}. The constructed sequence generalizes the results in  $\cite[\mbox{Theorems } 1, 2]{GOR}$ and $\cite[\mbox{Theorem } 4]{NX}$, see Remarks \ref{remark 1} and \ref{remark 2}. 

The paper is organized as follows. In Section II, we introduce two concepts of nonlinear complexity, and the function field $\mathcal{F}_{q,r}$. From a characterization of the Weierstrass semigroup at $q$ collinear rational places $P_\infty, Q_2, \ldots, Q_{q}$ on the function field $\mathcal{F}_{q,r}$, in Section III we present the new sequence over $\mathbb{F}_{q^{2r}}$.

\section{Preliminaries}

\subsection{Nonlinear complexity of sequences} 
The following nonlinear complexity notion was first defined by  Niederreiter and Xing in \cite{NX}, and it is also the main notion used in \cite{GOR} and \cite{YXY}. 
\begin{definition}\label{order nonlinear 1} Let $\mathbf{t}_n=(t_1, \ldots , t_n)$ be a sequence of length $n \geq 1$ over $\mathbb{F}_q$ and let $k \in \mathbb{N}$. If $t_i = 0$ for all $i=1, \ldots , n$, then we define the $k$-th \textit{order nonlinear complexity} $N^{(k)}(\mathbf{t}_n)$ to be $0$. Otherwise, $N^{(k)}(\mathbf{t}_1):=1$ and, for $n\geq2$, let $N^{(k)}(\mathbf{t}_n)$ be the smallest $m \in \mathbb{N}$ for which there exists a polynomial $f \in \mathbb{F}_q[x_1, \ldots , x_m]$, of degree at most $k$ in each variable, such that
	\begin{equation} \label{def nonlinear}
	t_{i+m} = f(t_i, t_{i+1}, \ldots , t_{i+m-1}), \mbox{ for } 1 \leq i \leq n-m.
	\end{equation}
\end{definition}

It is clear that $N^{(k)}(\mathbf{t}_n) \geq 0$ for all $n \geq 1$. If $n=2$, then $N^{(k)}(\mathbf{t}_2)=0$ or $1$. Now, for $n \geq 3$, since (\ref{def nonlinear}) is satisfied for $m=n-1$ with $f$ the constant polynomial $t_n$, we have that $0 \leq N^{(k)}(\mathbf{t}_n) \leq n-1$. Note that if $\mathbf{t}_n=(0, \ldots ,0, 1)$, then $N^{(k)}(\mathbf{t}_n)=n-1$.

Furthermore it suffices to consider $1 \leq k \leq q-1$, since any polynomial $f \in \mathbb{F}_q[x_1, \ldots , x_m]$ can be represented by a polynomial $g \in \mathbb{F}_q[x_1, \ldots , x_m]$ of degree at most $q-1$ in each variable, see \cite{LD}. For $k\geq q-1$, all nonlinear complexities $N^{(k)}(\mathbf{t}_n)$ of a fixed $\mathbf{t}_n$ are the same and equal to the called maximum-order complexity $M(\mathbf{t}_n) = N^{(q-1)}(\mathbf{t}_n)$ introduced by Jansen in \cite{J1} and \cite{J2}.

For a positive integer $n_1 \leq n$, a \textit{initial sequence} $\mathbf{t}_{n_1}$ of a given sequence $\mathbf{t}_n=(t_1, \ldots , t_n)$, means the sequence $\mathbf{t}_{n_1}=(t_1, \ldots , t_{n_1})$. 
\begin{lemma} \label{n1 n2}
	Let $\mathbf{t}_{n_1}$ and $\mathbf{t}_{n_2}$ be initial sequences of $\mathbf{t}_{n}$. If $n_1 \leq n_2$, then $N^{(k)}(\mathbf{t}_{n_1}) \leq N^{(k)}(\mathbf{t}_{n_2})$.
\end{lemma}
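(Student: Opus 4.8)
The plan is to show that any witness $(m,f)$ for the nonlinear complexity of the longer sequence $\mathbf{t}_{n_2}$ is automatically a witness for the shorter initial sequence $\mathbf{t}_{n_1}$, and then to invoke minimality in Definition \ref{order nonlinear 1}. The underlying observation is that passing from $\mathbf{t}_{n_2}$ to its prefix $\mathbf{t}_{n_1}$ only removes constraints: the recurrence (\ref{def nonlinear}) is required to hold over the range $1 \le i \le n-m$, and this range can only shrink as $n$ decreases, while the polynomial $f$ and the entries $t_1,\dots,t_{n_1}$ it acts on are left untouched.

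Concretely, I would first treat the generic case in which $\mathbf{t}_{n_1}$ (hence also $\mathbf{t}_{n_2}$) is nonzero. Set $m := N^{(k)}(\mathbf{t}_{n_2})$ and fix $f \in \mathbb{F}_q[x_1,\dots,x_m]$, of degree at most $k$ in each variable, with $t_{i+m}=f(t_i,\dots,t_{i+m-1})$ for $1 \le i \le n_2-m$. Because $n_1 \le n_2$ we have $n_1-m \le n_2-m$, so the same identity holds on the smaller range $1 \le i \le n_1-m$, using exactly the same entries; when this range is empty it is vacuously satisfied. Hence $m$ meets the defining condition for $\mathbf{t}_{n_1}$, and minimality yields $N^{(k)}(\mathbf{t}_{n_1}) \le m = N^{(k)}(\mathbf{t}_{n_2})$.

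It then remains to dispatch the degenerate cases that are built directly into the definition, and this bookkeeping is where the only genuine care is needed. If $\mathbf{t}_{n_1}$ is the all-zero sequence then $N^{(k)}(\mathbf{t}_{n_1})=0$ and the inequality follows from $N^{(k)}(\mathbf{t}_{n_2}) \ge 0$; if instead $\mathbf{t}_{n_1}$ is nonzero, then its super-sequence $\mathbf{t}_{n_2}$ is nonzero as well, so $N^{(k)}(\mathbf{t}_{n_2}) \ge 1$, which also subsumes the base case $n_1=1$ where $N^{(k)}(\mathbf{t}_{n_1})=1$ by fiat. I expect the main (if minor) obstacle to be precisely this interaction with the special values: one must use that complexity $0$ is attained only by the all-zero sequence, so that a nonzero prefix can never outrank the whole sequence, and that an empty index range counts as vacuously satisfying (\ref{def nonlinear}). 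No hypothesis on $k$ or on the degree of $f$ is used, since the witness $f$ is reused verbatim.
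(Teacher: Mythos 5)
Your proposal is correct and follows the same route as the paper: reuse the minimal witness $(m,f)$ for $\mathbf{t}_{n_2}$ and observe that the recurrence range $1 \le i \le n_1-m$ is a subset of $1 \le i \le n_2-m$, so minimality gives $N^{(k)}(\mathbf{t}_{n_1}) \le N^{(k)}(\mathbf{t}_{n_2})$. The only difference is that you additionally dispatch the degenerate cases (all-zero prefix, the $n_1=1$ convention, and the vacuous index range), which the paper's proof silently omits; this extra bookkeeping is sound and arguably makes the argument more complete.
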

\begin{proof}
	Let $1 \leq n_1 \leq n_2 \leq n$ be integers, and consider the initial sequences $\mathbf{t}_{n_1}$ and $\mathbf{t}_{n_2}$ of the sequence $\mathbf{t}_n$. If $N^{(k)}(\mathbf{t}_{n_2})=m$, then there exists $f(x_1, \ldots , x_m)  \in \fq[x_1, \dots x_m]$ of degree at most $k$ in each variable satisfying 
	$$
	t_{i+m} = f(t_i, t_{i+1}, \ldots , t_{i+m-1}), \mbox{ for } 1 \leq i \leq n_2-m.
	$$
	Since $n_1 \leq n_2$, we have that the condition $t_{i+m} = f(t_i, t_{i+1}, \ldots , t_{i+m-1}), \mbox{ for } 1 \leq i \leq n_1-m$ 
	 is also satisfied. The result follows.
\end{proof}

A modified notion for nonlinear complexity, denoted by $L^{(k)}(\mathbf{t}_n)$, was also used in \cite{NX} and \cite{GOR}. The difference between the two notions is that the condition ``of degree at most $k$ in each variable" in $N^{(k)}(\mathbf{t}_n)$, is replaced with the condition ``of total degree at most $k$" in $L^{(k)}(\mathbf{t}_n)$.

\begin{definition}\label{order nonlinear 2} Given a sequence $\mathbf{t}_n=(t_1, \ldots , t_n)$ of length $n \geq 1$ over $\mathbb{F}_q$ and $k \in \mathbb{N}$, if $t_i = 0$ for all $i=1, \ldots , n$, we define the $k$-th \textit{order nonlinear complexity} $L^{(k)}(\mathbf{t}_n)$ to be $0$. Otherwise $L^{(k)}(\mathbf{t}_1):=1$ and, for $n\geq2$, let $L^{(k)}(\mathbf{t}_n)$ be the smallest $m \in \mathbb{N}$ for which there exists a polynomial $f \in \mathbb{F}_q[x_1, \ldots , x_m]$ of total degree at most $k$ such that
	$$
	t_{i+m} = f(t_i, t_{i+1}, \ldots , t_{i+m-1}), \mbox{ for } 1 \leq i \leq n-m.
	$$
\end{definition}

It is clear that $L^{(k)}(\mathbf{t}_n) \geq N^{(k)}(\mathbf{t}_n)$, for any $k$ 
and sequence $\mathbf{t}_n$. We also observe that $L^{(1)}(\bf{t_n})$ is not the same as 
the linear complexity $L(\bf{t_n})$ since for the computation of $L^{(1)}(\bf{t}_n)$ we accept degree one 
polynomials with constant term. In fact, we always have  $L(\mathbf{t}_n) \geq L^{(1)}(\mathbf{t}_n)$, and from a Remark on \cite[p. 401]{LD} we also have $L^{(1)}(\mathbf{t}_n)\geq L(\mathbf{t}_n)-1$.

Similarly to Lemma \ref{n1 n2}, we have that if $\mathbf{t}_{n_1}$ and $\mathbf{t}_{n_2}$ are initial sequences of $\mathbf{t}_{n}$ with $n_1 \leq n_2$, then $L^{(k)}(\mathbf{t}_{n_1}) \leq L^{(k)}(\mathbf{t}_{n_2})$.

The Hermitian function field $\cH$ is defined over $\mathbb{F}_{q^2}$ by the affine equation $y^q + y = x^{q+1}$. In \cite[Section IV]{NX}, Niederreiter and Xing constructed a sequence $\bold{s_M}$ from $\cH$ of length $\bold{M}=(q-1)(q^2-1)$. The obtained lower bounds on the complexities  from this sequence were closely connected to the existence of  a function $f$ in $\cH$  with pole divisor  supported in two rational places and of degree $q(q-1).$ The following result gives us the bounds obtained for $N^{(k)}(\mathbf{s}_n)$ and $L^{(k)} (\mathbf{s}_n)$ for initial sequences of $\bold{s_M}$.

\begin{theorem}\cite[Theorems 3 and 4]{NX}\label{teoNiede}
Let $\bold{s_n}$ be a initial sequence of $\bold{s_M}$. Then, for any integer $1 \leq k \leq q^2-1$, we have
\begin{equation} \label{cota N NX}
N^{(k)} (\mathbf{s}_n) \geq \dfrac{\lfloor \frac{n}{q^2-1} \rfloor (q^2-1) -1}{\lfloor \frac{n}{q^2-1} \rfloor + kq(q-1)},
\end{equation}
and
\begin{equation} \label{cota L NX}
L^{(k)} (\mathbf{s}_n) \geq \dfrac{\lfloor \frac{n}{q^2-1} \rfloor (q^2-1) -(q^2-q-1)k -1}{\lfloor \frac{n}{q^2-1} \rfloor + k},
\end{equation}
for the $k$-th order nonlinear complexity.
\end{theorem}

In \cite{GOR}, Geil, \"Ozbudak and Ruano constructed an analogous sequence to the one in the work of Niederreiter and Xing from the Hermitian function field $\cH$. With this new sequence, they improved the lower bounds in (\ref{cota N NX}) and (\ref{cota L NX}). Their improvements are based on the existence of a function $f$ in $\cH$ with poles exactly in  two rational places, and pole divisor of degree $2(q-1)$.

\begin{theorem}\cite[Theorems 1 and 2]{GOR} \label{teoFerruh}
Let $\bold{s_n}$ be a initial sequence of $\bold{s_M}$. Then, for any integer $1 \leq k \leq q^2-1$, we have
\begin{equation} \label{cota N GOR}
N^{(k)} (\mathbf{s}_n) \geq \dfrac{\lfloor \frac{n}{q^2-1} \rfloor (q^2-1) -(q -1)}{\lfloor \frac{n}{q^2-1} \rfloor + 2k(q-1)},
\end{equation}
and
\begin{equation} \label{cota L GOR}
L^{(k)} (\mathbf{s}_n) \geq \dfrac{\lfloor \frac{n}{q^2-1} \rfloor (q^2-1) -(k+1)(q -1)}{\lfloor \frac{n}{q^2-1} \rfloor + k(q-1)},
\end{equation}
for the $k$-th order nonlinear complexity.
\end{theorem}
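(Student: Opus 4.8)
The plan is to convert every short nonlinear recurrence for $\mathbf{s}_n$ into a single nonzero function on $\cH$ that is forced to have far more zeros than its pole divisor allows, and then to read off the bound from the inequality (number of zeros) $\le\deg(\,\cdot\,)_\infty$. Recall that $\mathbf{s}_M$ is obtained from a cyclic automorphism $\sigma$ of $\cH$ of order $q^2-1$ together with a function $z$ whose only poles are two rational places $P_1,P_2$, each of pole order $q-1$, lying in one common $\sigma$-orbit; the sequence records the values of $z$ along the remaining $q-1$ orbits, which accounts for the length $M=(q-1)(q^2-1)$. Writing $N_0=\lfloor n/(q^2-1)\rfloor$, the initial sequence $\mathbf{s}_n$ thus contains $N_0$ complete orbits, all disjoint from the orbit carrying the poles of $z$.

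First I would set $m=N^{(k)}(\mathbf{s}_n)$, take a witnessing $f$ of degree at most $k$ in each variable with $s_{i+m}=f(s_i,\dots,s_{i+m-1})$, and put $z_j:=z\circ\sigma^{j}$. On each complete orbit the recurrence says precisely that
$$
h:=z_m-f(z_0,\dots,z_{m-1})
$$
vanishes at $\sigma^i(R)$ for $0\le i\le q^2-2-m$; summing over the $N_0$ orbits produces at least $N_0(q^2-1-m)=N_0(q^2-1)-N_0m$ distinct zeros of $h$, none of them at a pole of $z$ or of its shifts.

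Next I would bound $\deg(h)_\infty$. Each $z_j$ has poles only at $\sigma^{-j}(P_1),\sigma^{-j}(P_2)$, of order $q-1$, so at any place $R$ the pole order of a monomial $\prod_j z_j^{a_j}$ equals $\sum_j a_j\,v_R^-(z_j)$, with $v_R^-(z_j)$ the pole order of $z_j$ at $R$. For $N^{(k)}$ each $a_j\le k$ bounds this by $k$ times the \emph{sum} of the pole orders at $R$, while for $L^{(k)}$ the constraint $\sum_j a_j\le k$ bounds it by $k$ times the \emph{maximum} pole order at $R$. Because $P_1,P_2$ share an orbit, the shifted pole places $\sigma^{-j}(P_i)$ coincide in pairs, so only about $m$ distinct places carry poles and each carries two participating shifts; the max-versus-sum dichotomy then yields a per-place bound of $k(q-1)$ for $L^{(k)}$ against $2k(q-1)$ for $N^{(k)}$. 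Adding the boundary poles contributed by $z_m$ at $\sigma^{-m}(P_1),\sigma^{-m}(P_2)$, a careful tally gives
$$
\deg(h)_\infty\le 2k(q-1)\,m+(q-1)\quad\text{and}\quad \deg(h)_\infty\le k(q-1)\,m+(k+1)(q-1)
$$
in the two cases respectively.

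Finally I would check that $h\ne 0$: the shift $z_m$ has a pole at $\sigma^{-m}(P_1)$ while $f(z_0,\dots,z_{m-1})$ has none there, so $h$ genuinely has a pole and is nonzero (this uses $m<q^2-1$, the only range in which the asserted bound is not already vacuous, so that the $\sigma$-shifts do not wrap around). Then (number of zeros)$\,\le\deg(h)_\infty$ reads
$$
N_0(q^2-1)-N_0m\le \deg(h)_\infty,
$$
and inserting the two pole estimates and solving the resulting linear inequality for $m$ gives exactly (\ref{cota N GOR}) and (\ref{cota L GOR}). I expect the pole bookkeeping to be the main obstacle: pinning down the exact configuration of the two collinear places and of their $\sigma$-shifts, controlling the pairwise overlaps and the boundary terms tightly enough to reach the sharp additive constants $(q-1)$ and $(k+1)(q-1)$, and confirming $h\ne0$ in every admissible configuration, is where the real work---and the gain over \cite{NX}---lies.
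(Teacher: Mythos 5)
Your skeleton is exactly the template this paper itself uses (the paper does not reprove Theorem \ref{teoFerruh}, it quotes it from \cite{GOR}, but its own Theorems \ref{teo 1} and \ref{teo 2} are proved this way): form $w=-\sigma^{-m}(z)+f\bigl(z,\sigma^{-1}(z),\dots,\sigma^{-(m-1)}(z)\bigr)$, harvest $\lfloor n/(q^2-1)\rfloor\,(q^2-1-m)$ zeros from the complete orbits, get $w\neq 0$ from the fresh pole at $\sigma^{-m}(P_1)$ (your no-wrap-around remark for $m<q^2-1$ is the same device), and compare $\deg(w)_0\leq\deg(w)_\infty$. Your derivation of (\ref{cota N GOR}) is essentially sound.

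The genuine gap is in the $L^{(k)}$ bookkeeping, and it traces to a misremembered configuration. In \cite{GOR} the function has pole divisor $(q-1)P_\infty+(q-1)P_1$: one of the two places is the $\sigma$-\emph{fixed} place $P_\infty$ (compare $h_\ell$ in this paper, which always keeps $P_\infty$ as a pole; $\ell=2$ recovers the \cite{NX} function). The constant $(k+1)(q-1)$ in (\ref{cota L GOR}) comes from the fact that \emph{all} $m+1$ shifts $\sigma^{-j}(z)$ have a pole of order $q-1$ at the single fixed place, so the total-degree-$\leq k$ hypothesis caps the pole order of the $f$-part there at $k(q-1)$ \emph{independently of $m$}; adding $km(q-1)$ from the $m$ distinct affine shifted poles and $(q-1)$ from $\sigma^{-m}(P_1)$ gives $km(q-1)+(k+1)(q-1)$, which is precisely the mechanism in the proof of Theorem \ref{teo 2} here. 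Under your stated configuration, with both poles affine in one common orbit, say $P_2=\sigma^{c}(P_1)$, at most two shifts share a pole at any single place, the distinct pole places number $m+c$, and the honest tally gives only $\deg(w)_\infty\leq km(q-1)+(kc+1)(q-1)$; this matches the theorem exactly when $c=1$, a special adjacency you neither assert nor arrange, so your ``coincide in pairs'' heuristic cannot produce the $m$-independent cap that drives the bound. (The $N^{(k)}$ count is insensitive to this, since the degree-$k$-per-variable hypothesis yields $km(q-1)$ at the fixed place as well, totaling $2km(q-1)+(q-1)$ in either configuration.) Finally, you presuppose a function $z$ with $(z)_\infty=(q-1)(P_1+P_2)$; exhibiting it, e.g. $x^2/y$ with divisor $2\sum_{b^q+b=0,\ b\neq 0}P_{0b}-(q-1)P_{00}-(q-1)P_\infty$, transported to the desired pair by the doubly transitive action of $\Aut(\mathcal{H}/\fqs)$, is exactly the Weierstrass-semigroup input of \cite{GOR} that your sketch defers, and it is where the improvement over \cite{NX} (pole orders $(q-1,q-1)$ in place of $(q^2-q-1,1)$) actually originates.
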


\subsection{The maximal function field: $y^q + y = x^{q^r + 1} $} \label{kondo} 

Consider $\cf$ the function field over $\mathbb{F}_{q^{2r}}$ of  the curve defined  by the affine equation
   $$
y^q + y = x^{q^r + 1} ,
  $$
where $r \geq 1$ an odd integer. This function field $\cf$  has genus $g=q^{r}(q-1)/2$, and were first studied by Stichtenoth in \cite{Stich2}, and also by Garcia in \cite[Example 1]{Garcia}. 

For $r=1$, the function field $\mathcal{F}_{q, 1}$ is the Hermitian function field. So, in a way, $\cf$ can be seen as a generalization of the Hermitian function field.  

For each $x=a$ in $\mathbb{F}_{q^{2r}}$, the equation $y^q+y=a^{q^r+1}$ has $q$ distinct roots $y$ in $\mathbb{F}_{q^{2r}}$. And this determine $q^{2r+1}$ rational places on $\cf$, denoted by $P_{ab}$ where $ b^q+b=a^{q^r+1}, a, b $ in $\mathbb{F}_{q^{2r}}$. In addition to these places, there is a single place at infinity, $P_{\infty}$, the common pole of $x$ and $y$. So, $\cf$ has $q^{2r+1} + 1$ rational places and it is called a {\it maximal function field} over $\mathbb{F}_{q^{2r}}$ since its number of rational places equals the Hasse-Weil upper bound, namely equals $q^ {2r} + 1 + 2gq^r$.

Given a function $f$ in $\mathcal{F}_{q,r}$, we denote by $(f)$ and $(f)_\infty$ the divisor and pole divisor of $f$, respectively.

Fix $a \in \mathbb{F}_{q^{2r}}^{\ast}$ and let $P_{\infty}, Q_2=P_{ab_1}, Q_{3}= P_{ab_{2}},\ldots, Q_{q+1}=P_{ab_{q}}$ be $q+1$ pairwise distinct rational places in $\mathcal{F}_{q, r}$, where $b_\ell^q+b_\ell=a, b_\ell \in \mathbb{F}_{q^{2r}}$ for each $\ell=1, \dots, q$. Given an $m$-tuple ${\mathbf{s}}=(s_1,\ldots,s_m)$ in $\mathbb{N}^m$, and $i \in \mathbb{N}$, define the $m$-tuple
$$
\gamma_{\mathbf{s},i}:= (s_1 (q^r + 1)-iq, s_2(q^r + 1)+i, \ldots , s_m (q^r + 1) + i).
$$
By \cite[Theorem 3.4]{CT2}, for $2 \leq m \leq q+1$, the minimal generating set $\Gamma_m$ of the Weierstrass semigroup $H(P_{\infty},Q_2, \ldots, Q_{m}),$ is given by
\begin{equation} \label{Gamma}
\displaystyle \Gamma_m=\left\{  \gamma_{\mathbf{s},i} \mbox{; } \sum_{j=1}^{m}s_j=q+1-m \mbox{, } 0<iq<s_1(q^r + 1) \mbox{, } s_j\geq 0 \right\}.
\end{equation}
Therefore, for each $\gamma_{\mathbf{s},i} \in \Gamma_m$ there exists a rational function $f \in \mathcal{F}_{q, r}$ such that
\begin{equation} \label{funcao geral}
(f)_{\infty} = (s_1 (q^r + 1)-iq)P_{\infty} + (s_2(q^r + 1) + i)Q_2 + \cdots + (s_m(q^r + 1) + i)Q_{m}.
\end{equation}
By Equation (\ref{Gamma}), the degree of the pole divisor of $f$ is
$$
\displaystyle \deg((f)_{\infty}) = \sum_{j=1}^{m} s_j (q^r+1) - i (q-m+1) =(q-m+1)(q^r+1-i).
$$ 
Since $\Gamma_m$ is the minimal generating set of $H(P_{\infty}, Q_2, \ldots , Q_{m})$, it does not exist a rational function $g \in \mathcal{F}_{q, r}$ with 
$(g)_{\infty} =a_1P_\infty +a_2Q_2+ \dots +a_m Q_{m}$ and $\deg ((g)_{\infty}) < q^r-q^{r-1} +1$. So, we can
conclude that the smallest positive value for $\deg((f)_{\infty})$ is $q^r-q^{r-1}+1$, for $m=q$ and $i=q^{r-1}$.  We recall the improvements obtained in \cite{GOR}, of the results in \cite{NX}, were based on the existence of a function in $\mathcal{F}_{q, 1}$ with pole divisor in two places, and of small degree equal to $2(q-1)$.

By the definition of the set $\Gamma_q$, it exists a rational function $\varphi$ in $\mathcal{F}_{q, r}$ with pole divisor
\begin{equation} \label{funcao varphi}
(\varphi)_{\infty} =  P_{\infty} + q^{r-1}Q_2 + \cdots + q^{r-1}Q_q .
\end{equation}
This function $\varphi$ will play a fundamental role on the construction of the nonlinear sequence in the next section. 

In \cite{kondo}, the authors proved that the automorphism group of $\cf$ has a subgroup of order $q^{2r+1}(q^r+1)(q-1)$, consisting of all automorphism $\sigma$ defined by means of
\begin{equation} \label{automorfismo sigma}
\begin{array}{lll}
	\sigma: & x  \mapsto & \epsilon x + \xi, \\
	\\
	& y \mapsto & \epsilon^{q^r +1}y + \sum_{i=0}^{r-1} (-1)^i (\epsilon \xi^{q^r} x )^{q^{r-i-1}} + \beta,
\end{array}
\end{equation}
with $\xi, \beta, \epsilon$ in $ \mathbb{F}_{q^{2r}}$ satisfying $\beta^q + \beta = \xi^{q^r + 1}$, and $\epsilon^{(q^r+1)(q-1)}=1$. 

In the following, we collect some properties on automorphisms of function fields, that can be found in \cite[Section 8.2]{Stich} and will be useful for the results in the next section.
 \begin{lemma} \label{lemma auto}
Let $P$ be a rational place on $\mathcal{F}_{q, r}$, and $v_P$ be the discrete valuation at $P$. For any $\sigma \in \mbox{Aut}(\mathcal{F}_{q, r}/\mathbb{F}_{q^{2r}})$, and $f$ a function on the function field $\mathcal{F}_{q, r}$, we have
\begin{enumerate}[i)]
\item $\sigma(P)$ is also a rational place,
\item $v_{\sigma(P)} (\sigma(f)) = v_P (f)$, and
\item $\sigma(f) (\sigma(P)) = f(P)$ if $v_P(f) \geq 0$.
\end{enumerate}
\end{lemma}

\section{Sequences with high nonlinear complexity from $\cf$}

Let $\mathcal{F}_{q, r}$ and $P_{\infty}, Q_2=P_{ab_1}, Q_{3}=P_{ab_2}, \ldots, Q_{q}=P_{ab_{q-1}}$, with $a \in \mathbb{F}_{q^{2r}}^{\ast}$ and $b_i^q+b_i=a^{q^r+1}$, be as in Section \ref{kondo}. Consider the automorphism $\sigma$ in $Aut(\cf)$ given by
\begin{equation} \label{autom sigma}
\sigma(x) = \epsilon x,  \mbox{ } \mbox{ } \sigma(y) = \epsilon^{q^r+1} y, 
\end{equation}
where $\epsilon$ in $ \mathbb{F}_{q^{2r}}^{\ast}$ is an element of order  $(q^r+1)(q-1)$. Then $\sigma$ has order $(q^{r} + 1)(q-1)$, and fixes the places $P_{\infty}$ and $P_{0 0}$. The action of $\sigma$ on  the other $q^{2r+1}-1$ rational places of $\mathcal{F}_{q, r}$ provides a partition in $q(q^{r-1}+\cdots+q+1)+1$ orbits, where $q(q^{r-1}+\cdots+q+1)$ of them have length $(q^r+1)(q-1)$, and one has length $q-1$.

To simplify the notation we let $\mu:=(q^{r} + 1)(q-1)$ and $\nu:=q(q^{r-1}+\cdots+q+1)$. So, under the action of $\sigma$ on $Q_2, \ldots , Q_{q}$ we obtain $q-1$ orbits, each containing exactly $\mu$ distinct rational places of $\mathcal{F}_{q, r}$. Take $Q_{q+1},\ldots, Q_{\nu+1}$  other rational places on $\mathcal{F}_{q, r}$ such that, under the action of $\sigma$, we get $\nu-(q-1)$ distinct orbits of length $\mu$ pairwise distinct. Then we have the following $\nu$ distinct orbits
\begin{align*}
& Q_2, \sigma(Q_2), \dots , \sigma^{\mu -1}(Q_2),\\
& \vdots \\
& Q_{q}, \sigma(Q_{q}), \dots , \sigma^{\mu -1}(Q_{q})\\
& Q_{q+1}, \sigma(Q_{q+1}), \dots , \sigma^{\mu -1}(Q_{q+1}),\\
& \vdots \\
& Q_{\nu+1}, \sigma(Q_{\nu+1}), \dots , \sigma^{\mu -1}(Q_{\nu+1}).
\end{align*}
Let $\varphi$ be the rational function given in Equation (\ref{funcao varphi}). Since $\varphi$ has poles only in $P_\infty, Q_2, \dots , Q_{q}$, the sequence 
\begin{equation} \label{sequencia}
s_{(i-1)(\mu - 1) + j} : = \varphi(\sigma^{j} (Q_{i+1})), \text{ for } 1 \leq i \leq \nu \text{ and } 1 \leq j \leq \mu -1,
\end{equation}
is well defined. This sequence is denoted by $\mathbf{s}_M:=(s_1, \ldots , s_M)$, and has length $M=\nu (\mu - 1)=q(q^{r-1}+\cdots+q+1)(q^{r+1}-q^r+q-2)$.
Note that the sequence $\mathbf{s}_M$ is not identically zero since $\varphi$ has at most $q^r-q^{r-1}+1$ zeros. Explicitly, the sequence $\mathbf{s}_M$ is
$$
\begin{array}{llll}
s_1=\varphi(\sigma(Q_2))), & s_2 = \varphi(\sigma^2(Q_2)), & \dots & s_{\mu - 1} = \varphi(\sigma^{\mu - 1}(Q_2))\\
\vdots \\
s_{(q-2)(\mu - 1)+1}=\varphi(\sigma(Q_q)), & s_{(q-2)(\mu - 1)+2} = \varphi(\sigma^2(Q_q)), & \dots & s_{(q-1)(\mu - 1)} = \varphi(\sigma^{\mu - 1}(Q_q))\\
s_{(q-1)(\mu - 1)+1}=\varphi(\sigma(Q_{q+1})), & s_{(q-1)(\mu - 1)+2} = \varphi(\sigma^2(Q_{q+1})), & \dots & s_{q(\mu - 1)} = \varphi(\sigma^{\mu - 1}(Q_{q+1}))\\
\vdots & &  & \\
s_{(\nu -1)(\mu - 1)+1}=\varphi(\sigma(Q_{\nu+1})), &s_{(\nu -1)(\mu - 1)+2} = \varphi(\sigma^2(Q_{\nu+1})), & \dots & s_{\nu(\mu - 1)} = \varphi(\sigma^{\mu - 1}(Q_{\nu+1})).
\end{array}
$$

\begin{theorem} \label{teo 1}
Under notation and assumptions as above, let $1 \leq n \leq M=\nu (\mu - 1)$ be an integer and consider the initial sequence $\mathbf{s}_n = (s_1, \ldots , s_n)$ of the sequence $\mathbf{s}_M$ as above. For any integer $1 \leq k \leq \mu - 1$, we have


\begin{equation}\label{cota N}
N^{(k)} (\mathbf{s}_n) \geq \dfrac{\lfloor \frac{n}{q^{r+1}-q^r+q-2} \rfloor (q^{r+1}-q^r+q-2) -q^{r-1}(q -1)}{\lfloor \frac{n}{q^{r+1}-q^r+q-2} \rfloor + k(q^r-q^{r-1}+1)},
\end{equation}
for the $k$-th order nonlinear complexity.

\end{theorem}
\begin{proof}
If $n < \mu -1=q^{r+1}-q^r+q-2$, then the lower bound is trivial. Hence we assume that $n \geq \mu - 1$ without loss of generality. Suppose that there exist a polynomial $f(x_1, \ldots, x_m) \in \mathbb{F}_{q^{2r}} [x_1, \dots x_m]$, where $1\leq m \leq \mu-2$, of degree at most $k$ in each variable such that
\begin{equation} \label{seq sm}
s_{m+\lambda} = f(s_{\lambda}, s_{\lambda+1}, \ldots , s_{\lambda+m-1}) \text{ for } 1 \leq \lambda \leq n-m,
\end{equation}
and $n \geq m+1$. By Lemma \ref{n1 n2}, it is enough to prove the result for $n = \delta (\mu-1)$, with $1 \leq \delta \leq \nu$. For $t=1, \ldots , \mu-1 - m$ and $i = 1,\ldots , \delta$ we have
$$
s_{(i-1)(\mu - 1) + m+t} = f(s_{(i -1) (\mu - 1) + t}, s_{(i-1) (\mu - 1) + t+1}, \ldots , s_{(i-1) (\mu - 1) + t+m-1}), 
$$
which is equivalent to 
\begin{equation} \label{eq hmt}
 \varphi(\sigma^{m+t}(Q_{i+1})) = f(\varphi(\sigma^t(Q_{i+1})), \ldots , \varphi(\sigma^{t+m-1}(Q_{i+1}))).
 \end{equation}
For $\varphi$ as in Equation (\ref{funcao varphi}), define the function $w$ in $\mathcal{F}_{q, r}$ as
$$w:= - \sigma^{-m}(\varphi) + f(\varphi,\sigma^{-1}(\varphi) , \ldots , \sigma^{-(m-1)}(\varphi)).$$
From Equation (\ref{funcao varphi}) we have $\mbox{deg}(\varphi)_{\infty} = q^r - q^{r-1} + 1, v_{P_{\infty}}(\varphi) = -1$, and $v_{Q_{k}}(\varphi) = -q^{r-1}$, for all $k=2,\ldots, q$. Thus, by Lemma \ref{lemma auto} (ii), it follows that $v_{\sigma^{-m}(Q_2)}(\sigma^{-m}(\varphi))=v_{Q_{2}}(\varphi)=-q^{r-1}$, and so the place $\sigma^{-m}(Q_2)$ is a pole of $\sigma^{-m}(\varphi)$. On the other hand, for each $b=0,\ldots , m-1$ we have $v_{\sigma^{-m}(Q_2)}(\sigma^{-b}(\varphi))=v_{Q_{2}}(\sigma^{m-b}(\varphi))$, and the divisor
$$(\sigma^{m-b}(\varphi))_{\infty} = P_{\infty} + q^{r-1} \sigma^{m-b}(Q_2) + \cdots + q^{r-1} \sigma^{m-b}(Q_{q}).$$
Then we conclude that $\sigma^{-m}(Q_2)$ is not a pole of  $\sigma^{-b}(\varphi).$ 
This ensures the function $w \neq 0$.

Now we are going to determine bounds on the number of poles and zeros of the function $w$.
The poles of $\varphi$ are exactly $P_{\infty}, Q_2, \ldots , Q_{q}$ and, by definition of $\sigma$, it follows that $v_{\sigma^k (Q_{i+1})}(\varphi) \geq 0$, for all $1 \leq k \leq \mu -1$ and $1 \leq i \leq \delta$. So, by Lemma \ref{lemma auto} (iii), it follows that for any integer $j$ 
$$
\varphi (\sigma^{j+k} (Q_{i+1})) = \varphi (\sigma^j (\sigma^k(Q_{i+1}))) = \sigma^{-j}(\varphi)(\sigma^k(Q_{i+1})).
$$
Thus, for $1 \leq t \leq \mu -1 - m$  we have
$$ 
\begin{array}{ll}
w(\sigma^{t}(Q_{i+1}))&=- \sigma^{-m}(\varphi)(\sigma^{t}(Q_{i+1}))  \\
& \\
&  + f(\varphi(\sigma^{t}(Q_{i+1})),\sigma^{-1}(\varphi)(\sigma^{t}(Q_{i+1})) , \ldots , \sigma^{-(m-1)}(\varphi)(\sigma^{t}(Q_{i+1})))\\
& \\
 & = -\varphi (\sigma^{m+t} (Q_{i+1})) + f(\varphi (\sigma^{t} (Q_{i+1})), \varphi (\sigma^{t+1} (Q_{i+1})), \ldots , \varphi (\sigma^{t+m-1} (Q_{i+1})))\\
 & \\
 & =0 \mbox{ (by (\ref{eq hmt}))}.
\end{array}
$$
Therefore we can conclude that
$$ 
 w(\sigma(Q_{i+1})) = \cdots = w(\sigma^{\mu -1 - m}(Q_{i+1}))=0, \mbox{ for all } i=1,\ldots,\delta.
$$
So, the zero divisor of the function $w$ has degree at least 
\begin{equation} \label{w0}
\mbox{deg}(w)_0 \geq \delta(\mu -1-m).
\end{equation}
The pole divisor of $\varphi$ is $(\varphi)_{\infty} = P_{\infty} +q^{r-1} Q_2 + \cdots + q^{r-1}Q_{q}$ 
and, by Lemma \ref{lemma auto} (ii)
$$
v_{\sigma^{-j} (P_{\infty})}(\sigma^{-j}(\varphi)) = v_{P_{\infty}} (\varphi) = -1 \text{ and } v_{\sigma^{-j} (Q_{k})}(\sigma^{-j}(\varphi)) = v_{Q_{k}} (\varphi) = -q^{r-1}, 
$$
for each $k=2,\ldots,q$ and  $j=0, \dots , m$. This yields
$$(\sigma^{-j}(\varphi))_{\infty} = P_{\infty} +q^{r-1} \sigma^{-j} (Q_2) + \cdots + q^{r-1}\sigma^{-j} (Q_{q}), \, j=0, \ldots, m.  $$
Thus
$$
\begin{array}{ll}
& (f(\varphi,\sigma^{-1}(\varphi) , \ldots , \sigma^{-(m-1)}(\varphi)))_{\infty} \leq \\ 
&\\
& km P_{\infty} + k \sum_{i=1}^{q-1} q^{r-1}Q_{i+1}+ k \sum_{i=1}^{q-1} q^{r-1}\sigma^{-1}(Q_{i+1}) + \cdots +  k \sum_{i=1}^{q-1} q^{r-1}\sigma^{-(m-1)}(Q_{i+1}).
 \end{array}
$$
Therefore,
$$
\mbox{deg}(w)_{\infty} \leq km  + kmq^{r-1}(q - 1) + q^{r-1}(q - 1) = km(q^r+1-q^{r-1}) + q^{r-1}(q-1).
$$
Since $\mbox{deg}(w)_{0} = \mbox{deg}(w)_{\infty}$, we can conclude that
$$
m \geq \dfrac{\delta(\mu - 1) - q^{r-1}(q - 1)}{k(q^r-q^{r-1}+1) + \delta}.
$$
\end{proof}

We observe that the longer the sequence $\mathbf{s}_n$, the greater the lower bound on $N^{(k)} (\mathbf{s}_n)$. For sequences over the same finite field $\mathbb{F}_{q^{2r}}$, with $r$ an odd integer, in Theorem \ref{teo 1} we can have higher values for the length of a sequence $\mathbf{s}_n$ compared to the ones given in Theorem \ref{teoFerruh}.

\begin{remark} \label{remark 1}
For $r=1$, we have a nonlinear sequence over the Hermitian function field, and if $q \geq 3$, the lower bound given in Theorem \ref{teo 1} is an improved lower bound on $N^{(k)} (\mathbf{s}_n)$, compared to the lower bound (\ref{cota N GOR}) in Theorem \ref{teoFerruh}. That is,
$$
\dfrac{\lfloor \frac{n}{q^2-2} \rfloor (q^2-2) -(q -1)}{\lfloor \frac{n}{q^2-2} \rfloor + kq} > \dfrac{\lfloor \frac{n}{q^{2}-1} \rfloor (q^{2}-1) -(q -1)}{\lfloor \frac{n}{q^{2}-1} \rfloor + 2k(q-1)}, \mbox{ for } q \geq 3.
$$

In fact, let $\eta_1:= \lfloor \frac{n}{q^2-1} \rfloor$ and $\eta_2:= \lfloor \frac{n}{q^2-2} \rfloor$. So $\eta_1 \leq \eta_2 \leq \eta_1 + 1$, that is, $\eta_2=\eta_1 + \eta$, where $\eta$ is equal to $0$ or $1$. We recall that $1 \leq \eta_1,\eta_2 \leq q$. Thus, we have that
$$
\dfrac{(\eta_1 + \eta) (q^2-2) -(q -1)}{(\eta_1 + \eta) + kq} > \dfrac{\eta_1 (q^{2}-1) -(q -1)}{\eta_1 + 2k(q-1)}
$$
if and only if
$$
 k \cdot \left[q^2 [(q-2)(\eta_1+\eta) + q\eta - 1] - q[3 \eta_1 + 4 \eta - 3] + 2[2(\eta_1 + \eta)-1] \right] +  \eta_1(\eta_1 + \eta) + \eta(q-1) >0.
$$
The last equality is true for $q \geq 3$.
\end{remark}

For the order nonlinear complexity $L^{(k)} (\mathbf{s}_n)$, given in Definition \ref{order nonlinear 2}, we have the following result.

\begin{theorem} \label{teo 2}
Under notation and assumptions as above, let $1 \leq n \leq M = \nu(\mu-1)$ be an integer and consider the initial sequence $\mathbf{s}_n = (s_1, \ldots , s_n)$ of the sequence $\mathbf{s_M}$ as above. For any integer $1 \leq k \leq \mu -1$, we have

\begin{equation}\label{cota N1}
L^{(k)} (\mathbf{s}_n) \geq \dfrac{\lfloor \frac{n}{q^{r+1}-q^r+q-2} \rfloor (q^{r+1}-q^r+q-2) -q^{r-1}(q -1) - k}{\lfloor \frac{n}{q^{r+1}-q^r+q-2} \rfloor + kq^{r-1}(q-1)}
\end{equation}

for the $k$-th order nonlinear complexity.
\end{theorem}
\begin{proof}
We follow the same approach as in the proof of Theorem \ref{teo 1},  but here we assume that $f(x_1, \ldots, x_m)$ is a polynomial in $\mathbb{F}_{q^{2r}}[x_1, \dots x_m]$ of total degree at most $k$. The proof is the same until Equation  (\ref{w0}), and the proof thereafter changes as follows.
Let $g(x_1, \ldots , x_m) = c x_1^{i_1} \ldots x_m^{i_m} \in \mathbb{F}_{q^{2r}}[x_1, \dots, x_m]$ be any monomial having a nonzero coefficient in $f(x_1, \ldots, x_m)$. We have $i_1 + \ldots + i_m \leq k$, and the pole divisor of the function 
$g(\varphi,\sigma^{-1}(\varphi) , \ldots , \sigma^{-(m-1)}(\varphi))$ satisfies 
\begin{align*}
(g(\varphi, \sigma^{-1}(\varphi) , \ldots , \sigma^{-(m-1)}(\varphi)))_{\infty} &\leq  k P_{\infty} + i_1 \sum_{i=1}^{q - 1} q^{r-1}Q_{i+1} \\
& +  i_2 \sum_{i=1}^{q - 1} q^{r-1}\sigma^{-1}(Q_{i+1}) + \ldots +  i_m \sum_{i=1}^{q - 1}q^{r-1} \sigma^{-(m-1)}(Q_{i+1}).
\end{align*}
For the pole divisor of $w$ we obtain
\begin{align*}
(w)_{\infty} &\leq k P_{\infty} + k \sum_{i=1}^{q - 1} q^{r-1}Q_{i+1} +  k\sum_{i=1}^{q - 1} q^{r-1}\sigma^{-1}(Q_{i+1}) + \ldots +  k \sum_{i=1}^{q - 1} q^{r-1}\sigma^{-(m-1)}(Q_{i+1})\\ 
& + q^{r-1}\sigma^{-m}(Q_2) + \ldots + q^{r-1} \sigma^{-m}(Q_q).
\end{align*}
In particular, we have
$$
\mbox{deg}(w)_{\infty} \leq k + km q^{r-1}(q-1) + q^{r-1}(q - 1).
$$
Since $\mbox{deg}(w)_0 \geq \delta(\mu -1-m)$ and $\mbox{deg}(w)_0 = \mbox{deg}(w)_{\infty}$, we conclude
$$
m \geq \dfrac{\delta(\mu-1) - q^{r-1}(q - 1) - k}{\delta + kq^{r-1}(q - 1)}.
$$
This completes the proof.
\end{proof}

\begin{remark} \label{remark 2}
The lower bound given in Theorem \ref{teo 2} is an improved lower bound on $L^{(k)} (\mathbf{s}_n)$, compared to the lower bound (\ref{cota L GOR}) in Theorem \ref{teoFerruh} when we look for the sequences in the same finite field $\mathbb{F}_{q^{2}}$, if $q\geq 5$.
\end{remark}

\noindent
{\bf Aknowledgments}

The authors thank the referees for the useful comments that have improved the presentation of this work. The first author thanks Fundação de Amparo à Pesquisa do Estado de Minas Gerais - FAPEMIG. The second author thanks Conselho Nacional de Desenvolvimento Científico e Tecnológico-CNPq, Fundação de Amparo à Pesquisa do Estado do Rio de Janeiro - FAPERJ, and Coordenação de Aperfeiçoamento de Pessoal de Nível Superior-CAPES for the partial support of this research. The third author thanks CNPq and FAPEMIG.

\bibliographystyle{spmpsci}

\end{document}